\documentclass[12pt]{article}

\usepackage[utf8]{inputenc}
\usepackage[T1]{fontenc}
\usepackage{amsmath, amssymb, amsfonts, mathtools, amsthm}
\usepackage{bbm}
\usepackage{booktabs}
\usepackage{multirow}
\usepackage{colortbl}
\usepackage{etoolbox}
\usepackage{siunitx}
\usepackage{color,soul}
\usepackage{xcolor}
\usepackage{geometry}
\usepackage{csquotes,nicefrac}
\usepackage[numbers,sort&compress]{natbib}
\usepackage{xurl} 
\usepackage[hidelinks]{hyperref}
\usepackage{cleveref}
\usepackage{graphicx}
\usepackage{subcaption}

\usepackage{microtype} 
\usepackage{titlesec}
\titlespacing*{\section}{0pt}{1.2ex plus 1ex minus .2ex}{0.5ex plus .2ex}
\titlespacing*{\subsection}{0pt}{1ex plus 1ex minus .2ex}{0.5ex plus .2ex}
\titlespacing*{\paragraph}{0pt}{1ex plus .5ex minus .2ex}{1em}

\DeclareMathOperator*{\argmax}{arg\,max}

\DeclarePairedDelimiterX{\inp}[2]{\langle}{\rangle}{#1, #2}

\newtheorem{theorem}{Theorem}

\title{\vspace{-15mm}Incentive Alignment in Online Experimentation}
\author{Ermis Soumalias, Richard Mudd, Abbas Zaidi
\\ \textit{Meta Platforms, Inc.}}
\date{\vspace{-8ex}}
\begin{document}

\maketitle
\vspace{-5 pt}

\begin{abstract}
Evaluating the causal effect of new features is a central goal for online platforms. While recent literature addresses limited testing traffic via centralized portfolio optimization, this perspective abstracts away a critical institutional reality: experimentation is operationally decentralized. The experimenters who develop new features also dictate which hypotheses to test, and they are typically rewarded based on empirical average treatment effects that are prone to upward bias. Left unchecked, this principal-agent conflict can severely erode platform value---a structural failure that conventional centralized levers, such as significance thresholds and traffic budgets, cannot resolve. By reframing experimentation as an incentive design problem, we demonstrate that two practical mechanisms, sample splitting and shrinkage, can effectively bridge this gap. Sample splitting aligns incentives perfectly at a bounded traffic cost, while shrinkage consumes no additional traffic and guarantees that interventions with negative expected effects are strictly unprofitable to field.
\end{abstract}

\section{Introduction}
Evaluating the causal effects of candidate interventions via A/B testing is a cornerstone of decision-making for online platforms. As these platforms face an ever-growing pool of ideas competing for scarce testing traffic, recent literature has predominantly modeled this bottleneck as a portfolio optimization problem \citep{schmit2019optimal,azevedo2020ab,sudijono2025optimizing}. Under this paradigm, a centralized decision-maker maximizes global returns by strategically trading off a few well-powered tests against a high volume of noisier estimates.

However, the theoretical appeal of this centralized approach obscures a practical institutional reality: in practice, resource allocation is highly decentralized. Individual experimenters---who possess the most informed local priors about their own features---typically dictate how testing budgets are spent \citep{tang2010overlapping,bakshy2014designing,kaufman2017democratizing}. Crucially, this decentralization introduces misaligned incentives. While the platform seeks to maximize the ground-truth value of launched features, experimenters are typically evaluated on the empirical average treatment effects of their successful tests. Because significance-filtered estimates \citep{vanzwet2021significance} are inherently susceptible to the winner's curse, experimenters face a rational incentive to favor testing strategies with high statistical upside, which can diverge significantly from genuine platform value.

To formalize this misalignment, we construct a budget allocation game. We demonstrate that when experimenters are rewarded for empirical effects, the resulting principal-agent conflict cannot be resolved through standard statistical gating. Indeed, for any fixed significance threshold and budget limit, strategic experimenters can select an allocation yielding exactly zero platform value, even when a platform-optimal allocation would generate arbitrarily large returns. Fundamentally, statistical gatekeeping alone cannot rescue structural misalignment.

To address this vulnerability, we reframe experimentation from a purely inferential exercise into a mechanism design problem. Specifically, we demonstrate how two practical mechanisms can successfully align experimenter behavior with platform value:

\begin{itemize}
    \item \textbf{Sample Splitting:} While traditionally viewed merely as an \emph{ex post} statistical tool to recover unbiased estimates, we provide a novel \emph{ex ante} economic rationale for its deployment. We prove that reserving a strictly independent evaluation sample perfectly aligns experimenter incentives, and that the resulting cost to overall platform value is strictly bounded.
    \item \textbf{Shrinkage:} Alternatively, practitioners often utilize shrinkage
to dampen marginal launches inflated by the winner's curse. We extend this statistical concept directly into the experimenter's economic incentive structure. By charging a variance penalty equal to the expected reward of a zero-effect intervention, we guarantee that negative-expected-value interventions become strictly unprofitable to test, all while preserving the full testing budget.
\end{itemize}

Taken together, our findings demonstrate that treating experimentation solely as a centralized optimization problem misses the practical behavioral dynamics of decentralized teams.
However, by viewing causal inference through a mechanism-design lens, platforms can safely decentralize experimentation and fully leverage the rapid pace of AI-driven feature generation without compromising the true value of their products.

\section{Related Work} \label{sec:related_work}

A growing literature models online experimentation as a resource-constrained portfolio problem, balancing well-powered tests against a larger volume of noisier estimates \citep{schmit2019optimal,azevedo2020ab,sudijono2025optimizing,feit2019test}. Crucially, what unites these treatments is the assumption of a single decision maker whose objective perfectly aligns with the platform. We retain the allocation framework but introduce strategic dynamics: the budget is spent by a self-interested experimenter maximizing an empirical proxy reward. Consequently, our work connects most closely to recent studies at the intersection of statistics and mechanism design \citep{hossain2025strategic,shi2024sharp,shi2025instance}, which model testing as a principal-agent game. Rather than deploying complex menus of statistical contracts, we demonstrate how to align incentives using practical mechanisms already familiar to practitioners: sample splitting and shrinkage.

Our sample-splitting mechanism (\Cref{subsec:sp_mechanism}) bridges the statistical literature on replication \citep{cox1975note,wasserman2009high} and selective inference \citep{fithian2014optimal,tian2018selective} with mechanism design. It is well established that conditioning on significance inflates reported effects \citep{lee2018winners,berman2022false}, an inferential hurdle formally characterized by \citet{andrews2024inference}.
While we borrow the sample-splitting machinery directly from this literature, our core contribution is the economic rationale: we establish its \emph{ex ante} utility for incentive alignment.

Finally, shrinking significance-filtered estimates toward a prior serves as a practical remedy for selection bias \citep{vanzwet2021significance,gelman2014beyond,kessler2024overcoming,mudd2026breaking}. Our shrinkage mechanism (\Cref{subsec:shrinkage}) applies this underlying intuition directly to the experimenter's incentive structure. By correcting the payoff itself rather than just the point estimate, we preserve the upside of large true effects while rendering null interventions strictly unprofitable (\Cref{thm:shrinkage_ic}).

\section{Formal Model} \label{sec:formal_model}
An experimenter allocates a budget of $B$ experimental units across $N$ candidate interventions ($b_i \in \mathbb{N}_{\ge0}$, $\sum b_i \leq B$). Intervention $i\in[N]$ has a true expected treatment effect $\mu_i\in\mathbb R$ and a known unit-level variance $\sigma_i^2>0$. Testing intervention $i$ with $b_i>0$ units yields an empirical effect $\hat{\mu}_i(b_i)\sim\mathcal N\!\left(\mu_i,\sigma_i^2/b_i\right)$.

We assume the sampling variances $\sigma_i^2$ are known and that the experimenter's prior belief matches the truth ($\tilde\mu_i=\mu_i$). This best-case information setting neatly isolates structural incentive misalignment from estimation error. At significance level $\alpha$ (with critical value $c_\alpha := z_{1-\alpha/2}$), the test is deemed significantly positive when
$L_i(b_i,\alpha) = \mathbbm{1}\!\left\{\hat{\mu}_i(b_i) > c_\alpha\frac{\sigma_i}{\sqrt{b_i}}\right\}$ is equal to $1$.
The resulting launch probability is:
\begin{equation} p_i(b) = \Pr\!\left[L_i(b,\alpha)=1\right] = \overline{\Phi}\!\left(c_\alpha - \frac{\mu_i\sqrt{b}}{\sigma_i}\right) \label{eq:launch_prob} \end{equation}
Because the platform values launched interventions solely by true effects, launching based on $L_i(b_i,\alpha)$ yields a realized platform reward of $\sum_{i=1}^N \mu_i L_i(b_i,\alpha)$.

\paragraph{The Incumbent Mechanism.} Under common industry practice, each intervention is launched upon a significantly positive result, and the experimenter's reward is proportional to the empirical effects of their launched interventions:
\begin{equation}
     R_E^{\mathrm{inc}}(\mathbf{b},\alpha)=\sum_{i=1}^N\hat{\mu}_i(b_i)L_i(b_i,\alpha).
\end{equation}
By comparison, the platform's reward is $R_P^{\mathrm{inc}}(\mathbf{b},\alpha)=\sum_{i=1}^N\mu_i L_i(b_i,\alpha)$.

For $b_i>0$, the experimenter's expected reward from intervention $i$ is the mean of a truncated Gaussian: \begin{equation} r_{E,i}^{\mathrm{inc}}(\mu_i,b_i) = \mathbb{E}\left[\hat{\mu}_i(b_i)L_i(b_i,\alpha)\right] = \mu_i p_i(b_i) + \frac{\sigma_i}{\sqrt{b_i}}\phi\left(c_\alpha-\frac{\mu_i\sqrt{b_i}}{\sigma_i}\right), \label{eq:experimenter_expected_reward_incumbent} \end{equation} while the platform's is $r_{P,i}^{\mathrm{inc}}(\mu_i,b_i)=\mu_i p_i(b_i)$. We denote the platform- and experimenter-optimal allocations as $\mathbf{b}_P^{\mathrm{inc}} \in \argmax_{\mathbf{b}} \mathbb{E}[R_P^{\mathrm{inc}}(\mathbf{b},\alpha)]$ and $\mathbf{b}_E^{\mathrm{inc}} \in \argmax_{\mathbf{b}} \mathbb{E}[R_E^{\mathrm{inc}}(\mathbf{b},\alpha)]$, respectively.

\section{Theoretical Results}

\subsection{Incentive Misalignment Can Completely Destroy Platform Value} \label{subsec:incentive_misalignment}

We first prove that no fixed statistical-significance threshold can eliminate this principal-agent conflict. Consequently, the resulting loss in platform value can be arbitrarily large.

\begin{theorem}
\label{thm:zero_platform_value}
For every $\alpha\in(0,1)$, $B\geq1$, and $V>0$, there exists an instance such that
$\mathbb{E}[R_P^{\mathrm{inc}}(\mathbf{b}_E^{\mathrm{inc}},\alpha)] \leq 0$ while
$\mathbb{E}[R_P^{\mathrm{inc}}(\mathbf{b}_P^{\mathrm{inc}},\alpha)]>V$.
\end{theorem}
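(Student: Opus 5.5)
The plan is to build an explicit instance with many \emph{null} interventions ($\mu_i=0$) that have very large variance, together with one genuinely valuable intervention with small variance. The idea rests on the closed form \eqref{eq:experimenter_expected_reward_incumbent}. For $\mu_i=0$ the experimenter's expected reward is $r_{E,i}^{\mathrm{inc}}(0,b)=\frac{\sigma_i}{\sqrt b}\phi(c_\alpha)$, while the platform's reward is $r_{P,i}^{\mathrm{inc}}(0,b)=0$. This reward is strictly positive, grows linearly in $\sigma_i$, and is strictly decreasing in $b$. So a strategic experimenter wants to place a single unit on as many high-variance nulls as possible. Platform value then collapses to exactly $0$.

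\textbf{Construction.} Fix $\alpha\in(0,1)$, so that $c_\alpha=z_{1-\alpha/2}>0$, and fix $B\ge1$ and $V>0$. Take $N=B+1$ interventions.
\begin{itemize}
\item Intervention $0$ has $\sigma_0=1$ and $\mu_0=m>0$. Since $\overline\Phi(c_\alpha-m)\to1$ as $m\to\infty$, we can choose $m$ so large that $m\,\overline\Phi(c_\alpha-m)>V$.
\item Interventions $1,\dots,B$ each have $\mu_i=0$ and a common $\sigma_i=s$. We choose $s$ so that $s\,\phi(c_\alpha)>m+\phi(0)$.
\end{itemize}

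\textbf{Step 1: the platform optimum exceeds $V$.} Allocating one unit to intervention $0$ already gives expected platform reward $m\,p_0(1)=m\,\overline\Phi(c_\alpha-m)>V$. The nulls contribute nothing to the platform. Hence $\mathbb{E}[R_P^{\mathrm{inc}}(\mathbf b_P^{\mathrm{inc}},\alpha)]>V$. (Alternatively, the platform can put all $B$ units on intervention $0$, which only increases $p_0$.)

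\textbf{Step 2: every experimenter-optimal allocation gives intervention $0$ zero units.}
\begin{itemize}
\item \emph{Bound on intervention $0$.} For any $b\ge1$ we have $r_{E,0}^{\mathrm{inc}}(m,b)\le m+\phi(0)/\sqrt b\le m+\phi(0)$, using $p_0\le1$ and $\phi\le\phi(0)$.
\item \emph{Reward on a null.} A null intervention receiving $b\ge1$ units yields $s\phi(c_\alpha)/\sqrt b\le s\phi(c_\alpha)$, with equality iff $b=1$.
\item \emph{Upper bound for an arbitrary allocation.} Let $k$ be the number of nulls receiving positive budget. Then $k\le B$, and $k\le B-1$ whenever $b_0\ge1$. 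Any allocation therefore earns at most $k\,s\phi(c_\alpha)+\mathbbm 1\{b_0\ge1\}(m+\phi(0))$.
\item \emph{The benchmark allocation.} Placing one unit on each of the $B$ nulls earns exactly $B\,s\phi(c_\alpha)$.
\item \emph{Comparison.} If $b_0\ge1$, the bound above is at most $(B-1)s\phi(c_\alpha)+m+\phi(0)<B\,s\phi(c_\alpha)$ by the choice of $s$. If $b_0=0$, the allocation is strictly worse than the benchmark unless every null gets exactly one unit.
\end{itemize}
Thus the experimenter optimum is unique and equal to $b_0=0$, $b_i=1$ for $i\ge1$. Its platform value is $\sum_{i\ge1}0\cdot p_i=0\le0$.

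\textbf{Main obstacle.} The delicate point is the one just handled. A naive two-intervention construction fails, because the experimenter's null reward $\sigma/\sqrt b$ decreases in $b$. Any leftover budget would then profitably spill onto the valuable intervention and yield positive platform value. Providing $B$ separate nulls, and bounding the valuable intervention's experimenter reward uniformly in $b$ by $m+\phi(0)$, closes this gap. It is also why the argument needs no restriction on $\alpha$ beyond $c_\alpha$ being finite, so that $\phi(c_\alpha)>0$.
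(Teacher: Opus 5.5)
Your proposal is correct and follows essentially the same route as the paper. Both use the same $B+1$-intervention instance: one high-mean, unit-variance intervention and $B$ high-variance nulls. Both exploit that $r_{E}^{\mathrm{inc}}(0,b)=S\phi(c_\alpha)/\sqrt b$ is decreasing in $b$. The differences are cosmetic. Your uniform bound $m+\phi(0)$ on intervention $0$'s experimenter reward replaces the paper's condition $S>\max_k r_{E,0}^{\mathrm{inc}}(M,k)/(k\phi(c_\alpha))$. You lower-bound the platform optimum using a single unit, and you make the uniqueness of the experimenter optimum explicit.
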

The formal proof of \Cref{thm:zero_platform_value} is provided in \Cref{app:proof_zero_platform_value}.

\subsection{The Sample-Splitting Mechanism} \label{subsec:sp_mechanism}
To resolve this structural flaw, the \emph{sample-splitting mechanism} rewards the experimenter based on an independent, unbiased estimate of the launched interventions' true effects.

\paragraph{Sample-Splitting Mechanism.}
Fix a holdout fraction $\gamma\in(0,1)$. Partition the allocated $b_i$ units into a \emph{selection sample} of size $(1-\gamma)b_i$ and an independent \emph{evaluation sample} of size $\gamma b_i$. The launch decision $L_i=L_i((1-\gamma)b_i,\alpha)$ is made using only the selection sample. If launched, the evaluation sample generates a fresh estimate $\hat{\mu}_i^E$. The experimenter receives $R_E^{\mathrm{split}}(\mathbf{b},\alpha)=\sum_i\hat{\mu}_i^E L_i$, and the platform receives $R_P^{\mathrm{split}}(\mathbf{b},\alpha)=\sum_i \mu_i L_i$.

\begin{theorem} \label{thm:incentive_alignment}
For any budget allocation $\mathbf{b}$ and holdout fraction $\gamma \in (0,1)$,
the expected platform and experimenter rewards under the sample-splitting mechanism
satisfy:  $\mathbb{E}[R_E^{\mathrm{split}}(\mathbf{b},\alpha)] = \mathbb{E}[R_P^{\mathrm{split}}(\mathbf{b},\alpha)]$.
\end{theorem}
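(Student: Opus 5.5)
The plan is to prove the identity term by term. By linearity of expectation,
\[
\mathbb{E}[R_E^{\mathrm{split}}(\mathbf{b},\alpha)]=\sum_{i=1}^N \mathbb{E}[\hat{\mu}_i^E L_i]
\quad\text{and}\quad
\mathbb{E}[R_P^{\mathrm{split}}(\mathbf{b},\alpha)]=\sum_{i=1}^N \mu_i\,\mathbb{E}[L_i].
\]
It therefore suffices to show $\mathbb{E}[\hat{\mu}_i^E L_i]=\mu_i\Pr[L_i=1]$ for each intervention $i$.

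The key step is independence. The launch indicator $L_i=L_i((1-\gamma)b_i,\alpha)$ is a measurable function of the selection-sample units only. The evaluation estimate $\hat{\mu}_i^E$ is a function of the disjoint evaluation-sample units only. Since the units are independent draws, $L_i$ and $\hat{\mu}_i^E$ are independent. The evaluation estimate is also unbiased, $\hat{\mu}_i^E\sim\mathcal N(\mu_i,\sigma_i^2/(\gamma b_i))$, so
\[
\mathbb{E}[\hat{\mu}_i^E L_i]=\mathbb{E}[\hat{\mu}_i^E]\,\mathbb{E}[L_i]=\mu_i\,p_i((1-\gamma)b_i).
\]
Here $p_i$ is the launch probability from \eqref{eq:launch_prob}. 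Summing over $i$ gives the claimed equality.

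The only point needing care is the degenerate case and the convention for unlaunched or untested interventions. If $b_i=0$, there is no test and $L_i=0$, so both contributions vanish. If $\hat{\mu}_i^E$ is computed only when $L_i=1$, we can still define it on the whole sample space (for example, always compute it) without changing the product $\hat{\mu}_i^E L_i$. This keeps the independence argument valid.

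I would also note the integrality issue: $(1-\gamma)b_i$ and $\gamma b_i$ may need rounding to integers with both parts positive. The argument only uses that the two subsamples are disjoint and that each has positive size when $b_i>0$, so it is unaffected. I expect no real obstacle. The main thing to state explicitly is the independence of the two subsamples, which drives the whole result. Nothing here depends on the specific Gaussian form beyond unbiasedness.
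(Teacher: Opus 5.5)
Your proof is correct and follows essentially the same route as the paper's: by construction $L_i$ and $\hat{\mu}_i^E$ are independent, and $\mathbb{E}[\hat{\mu}_i^E]=\mu_i$, so each term factors as $\mu_i\,p_i((1-\gamma)b_i)$, and summing over $i$ gives $\mathbb{E}[R_P^{\mathrm{split}}(\mathbf{b},\alpha)]$. Your remarks on the $b_i=0$ case, on defining $\hat{\mu}_i^E$ on the whole sample space, and on integer rounding are harmless additions that the paper leaves implicit.
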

The formal proof is deferred to \Cref{app:proof_incentive_alignment}. Consequently, the optimal budget allocations for the platform and experimenter coincide ($\mathbf{b}_P^{\mathrm{split}} = \mathbf{b}_E^{\mathrm{split}}$). This alignment comes at a strictly bounded cost to overall statistical power.

\begin{theorem} \label{thm:power_loss}
For any holdout fraction $\gamma\in(0,1)$ and threshold $c_\alpha \leq 2$, the expected platform value under the sample-splitting mechanism satisfies $\mathbb{E}[R_P^{\mathrm{split}}(\mathbf{b}_E^{\mathrm{split}},\alpha)] \geq (1-\gamma) \mathbb{E}[R_P^{\mathrm{inc}}(\mathbf{b}_P^{\mathrm{inc}},\alpha)]$.
\end{theorem}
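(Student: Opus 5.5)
The plan is to compare the sample-splitting optimum against a feasible benchmark. The benchmark is the incumbent platform-optimal allocation $\mathbf{b}_P^{\mathrm{inc}}$, run under the splitting mechanism. By \Cref{thm:incentive_alignment}, the experimenter's expected reward under splitting equals the platform's expected reward for every allocation. So $\mathbf{b}_E^{\mathrm{split}}$ maximizes $\mathbb{E}[R_P^{\mathrm{split}}(\cdot,\alpha)]$, and in particular
\[
\mathbb{E}[R_P^{\mathrm{split}}(\mathbf{b}_E^{\mathrm{split}},\alpha)] \geq \mathbb{E}[R_P^{\mathrm{split}}(\mathbf{b}_P^{\mathrm{inc}},\alpha)] = \sum_i \mu_i\, p_i\bigl((1-\gamma)b_{P,i}^{\mathrm{inc}}\bigr).
\]
This treats $(1-\gamma)b_i$ as an admissible selection-sample size, as the mechanism's definition implicitly does. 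It therefore suffices to show, term by term,
\[
\mu_i\, p_i\bigl((1-\gamma)b_i\bigr) \geq (1-\gamma)\,\mu_i\, p_i(b_i) \quad \text{for } b_i=b_{P,i}^{\mathrm{inc}}.
\]

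First I will reduce to $\mu_i \ge 0$. If $\mu_i<0$ and $b_i>0$, the incumbent term $\mu_i p_i(b_i)$ is strictly negative. Setting $b_i=0$ (no test, no launch) keeps the allocation feasible and strictly increases platform value. Hence we may take a platform-optimal $\mathbf{b}_P^{\mathrm{inc}}$ with $b_i=0$ whenever $\mu_i<0$, and those terms are $0$ on both sides. When $\mu_i=0$, both sides are zero.

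For $\mu_i>0$, write $k=\mu_i/\sigma_i>0$ and $f(x)=\Phi(k\sqrt{x}-c_\alpha)=p_i(x)$ for $x\ge0$. The key claim is that $f$ is concave on $[0,\infty)$ when $c_\alpha\le 2$. Given concavity and $f(0)=\Phi(-c_\alpha)\ge 0$, we get
\[
f(\lambda b)\ge \lambda f(b)+(1-\lambda)f(0)\ge \lambda f(b) \quad \text{with } \lambda=1-\gamma.
\]
Multiplying by $\mu_i>0$ and summing then proves the theorem.

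The concavity check is the main step, and it is where $c_\alpha\le2$ enters. Substitute $s=\sqrt{x}$ and $u=ks-c_\alpha$. Then $f'(x)=\phi(u)\,k/(2s)$. Using $\frac{d}{dx}=\frac{1}{2s}\frac{d}{ds}$ and $\phi'(u)=-u\phi(u)$,
\[
f''(x) = -\frac{k\,\phi(u)}{4s^3}\bigl(uks+1\bigr).
\]
With $w=ks\ge0$, the factor is $uks+1 = w^2-c_\alpha w+1$. Its minimum over $w$ is $1-c_\alpha^2/4$, which is nonnegative exactly when $c_\alpha\le 2$. So $f''\le0$ on $(0,\infty)$. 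Since $f$ is continuous at $0$, it is concave on $[0,\infty)$.
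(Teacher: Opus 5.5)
Your proposal is correct and follows the paper's proof step for step. Both arguments establish concavity of $p_i$ in $b$ for $c_\alpha\le 2$ via the sign of $w^2-c_\alpha w+1$ (the paper's $x^2-c_\alpha x+1$), apply the chord inequality through $p_i(0)=\overline{\Phi}(c_\alpha)\ge 0$, drop negative-effect interventions from $\mathbf{b}_P^{\mathrm{inc}}$, and then invoke \Cref{thm:incentive_alignment} to compare $\mathbf{b}_E^{\mathrm{split}}$ with $\mathbf{b}_P^{\mathrm{inc}}$ run under splitting. You additionally make explicit the integrality caveat for $(1-\gamma)b_i$, which the paper leaves implicit.
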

The formal proof is provided in \Cref{app:proof_power_loss}.

\subsection{The Shrinkage Mechanism} \label{subsec:shrinkage}
While sample splitting achieves perfect alignment by sacrificing some statistical power, the \textit{shrinkage} mechanism offers an alternative tradeoff. It preserves the full testing budget while partially aligning incentives, guaranteeing that non-positive expected-value interventions remain strictly unprofitable.

\paragraph{Shrinkage Mechanism.} For each intervention $i$ receiving $b_i>0$ units, the experimenter is rewarded based on the conditional empirical effect minus a variance penalty equal to the expected reward of a zero-effect intervention, $C_i(b_i,\alpha) = \frac{\sigma_i}{\sqrt{b_i}}\phi(c_\alpha)$. The net reward is:
\[ R_E^{\mathrm{shrink}}(\mathbf{b},\alpha) = \sum_{i:b_i>0} \left[ \hat{\mu}_i(b_i)L_i(b_i,\alpha)-C_i(b_i,\alpha) \right]. \]

\begin{theorem}[Sign Alignment] \label{thm:shrinkage_ic} For every intervention $i$ and allocation $b_i>0$, the expected reward under the shrinkage mechanism satisfies \[ \mathbb{E}\!\left[ \hat{\mu}_i(b_i)L_i(b_i,\alpha)-C_i(b_i,\alpha) \right] \begin{cases} >0, & \mu_i>0,\\ =0, & \mu_i=0,\\ <0, & \mu_i<0. \end{cases} \] Thus, under risk neutrality, negative-effect interventions are strictly unprofitable to test, while zero-effect interventions offer no expected gain from favorable experimental noise. \end{theorem}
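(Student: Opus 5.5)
We reduce the statement to a one-variable monotonicity claim. By \eqref{eq:experimenter_expected_reward_incumbent}, the expected shrinkage reward for intervention $i$ is
\[
g(\mu) := \mu\,\overline{\Phi}\!\left(c_\alpha - \tfrac{\mu\sqrt{b}}{\sigma}\right) + \tfrac{\sigma}{\sqrt{b}}\,\phi\!\left(c_\alpha - \tfrac{\mu\sqrt{b}}{\sigma}\right) - \tfrac{\sigma}{\sqrt{b}}\,\phi(c_\alpha),
\]
where we write $b=b_i$, $\sigma=\sigma_i$, $\mu=\mu_i$. We will show that $g(0)=0$ and that $g$ is strictly increasing in $\mu$. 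These two facts give all three cases of the theorem at once.

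\textbf{Step 1: normalize.} Set $s=\sigma/\sqrt{b}>0$ and $x=\mu/s$. Then
\[
g = s\,h(x), \qquad h(x)=x\,\overline{\Phi}(c_\alpha - x)+\phi(c_\alpha - x)-\phi(c_\alpha).
\]
Since $s>0$, $g$ and $h$ have the same sign, and $\mu$ and $x$ have the same sign. It therefore suffices to study $h$.

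\textbf{Step 2: value at zero.} Substituting $x=0$ gives $h(0)=0+\phi(c_\alpha)-\phi(c_\alpha)=0$. This settles the case $\mu_i=0$.

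\textbf{Step 3: derivative.} Differentiate using $\tfrac{d}{dx}\overline{\Phi}(c_\alpha-x)=\phi(c_\alpha-x)$ and $\tfrac{d}{dx}\phi(c_\alpha-x)=(c_\alpha-x)\,\phi(c_\alpha-x)$. This gives
\[
h'(x)=\overline{\Phi}(c_\alpha-x)+x\,\phi(c_\alpha-x)+(c_\alpha-x)\,\phi(c_\alpha-x)=\overline{\Phi}(c_\alpha-x)+c_\alpha\,\phi(c_\alpha-x).
\]
Because $\alpha\in(0,1)$, we have $c_\alpha=z_{1-\alpha/2}>0$. Also $\overline{\Phi}>0$ everywhere, so $h'(x)>0$ for every $x$. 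Hence $h$ is strictly increasing, with $h(x)>0$ for $x>0$ and $h(x)<0$ for $x<0$. This gives the cases $\mu_i>0$ and $\mu_i<0$.

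There is a cleaner way to see the derivative. $\mathbb{E}[\hat\mu L]$ is the expectation of $Y\mathbbm{1}\{Y>c_\alpha s\}$ with $Y\sim\mathcal N(\mu,s^2)$, and increasing $\mu$ shifts $Y$ upward, which raises this truncated mean. This sanity-checks the sign but is not needed for the proof.

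The main obstacle is only Step 3: we must get the cancellation of the $x\,\phi$ terms right and confirm $c_\alpha>0$. Once those hold, the argument is direct.

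The interpretive sentence at the end of the theorem then follows from risk neutrality. The experimenter values testing $i$ by its expected net reward, so a strictly negative expectation makes testing strictly unprofitable, and a zero expectation yields no gain.
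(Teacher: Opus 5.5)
Your proof is correct and takes essentially the same approach as the paper: both write the shrinkage reward as $r_{E,i}^{\mathrm{inc}}(\mu,b_i)-r_{E,i}^{\mathrm{inc}}(0,b_i)$, note that it vanishes at $\mu=0$, and show that its $\mu$-derivative equals $\overline{\Phi}(c_\alpha-\mu\sqrt{b_i}/\sigma_i)+c_\alpha\phi(c_\alpha-\mu\sqrt{b_i}/\sigma_i)>0$. Your rescaling to $x=\mu\sqrt{b}/\sigma$ is only a cosmetic difference.
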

The formal proof is deferred to \Cref{app:proof_shrinkage_ic}.

\begin{theorem}[Platform-Value Guarantee]
\label{thm:shrinkage_welfare}
For any fixed $\alpha\in(0,1)$, the platform's expected reward evaluated at the experimenter-optimal allocation in the shrinkage mechanism is a constant-fraction approximation of the maximum platform value achievable under the incumbent mechanism.
In particular, for standard significance $\alpha=0.05$, numerical evaluation gives an approximation factor of approximately $0.1772$.
\end{theorem}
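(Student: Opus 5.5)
The plan is a two-sided sandwich comparison carried out intervention by intervention. Everything is expressed in terms of the standardized effect $t := \mu_i\sqrt{b_i}/\sigma_i$. For $b_i>0$, the platform's expected reward under either mechanism is $r_{P,i}=\frac{\sigma_i}{\sqrt{b_i}}\,f(t)$ with $f(t):=t\,\overline{\Phi}(c_\alpha-t)$. By \eqref{eq:experimenter_expected_reward_incumbent}, the experimenter's expected shrinkage reward is $\frac{\sigma_i}{\sqrt{b_i}}\,h(t)$ with $h(t):=t\,\overline{\Phi}(c_\alpha-t)+\phi(c_\alpha-t)-\phi(c_\alpha)$. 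Both quantities carry the same scale factor $\sigma_i/\sqrt{b_i}$, so their ratio depends only on $t$. I will define
\[ \beta(\alpha):=\inf_{t>0}\frac{f(t)}{h(t)},\qquad a(\alpha):=\inf_{t>0}\frac{h(t)}{f(t)}, \]
and aim to prove the factor $\kappa(\alpha):=a(\alpha)\beta(\alpha)$.

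\textbf{Step 1: restrict to positive effects.} First I reduce to interventions with $\mu_i>0$ on both sides. For the platform, any units placed on $\mu_i\le 0$ contribute nonpositive value. Moving them to $b_i=0$ keeps the allocation feasible and does not decrease platform value, so some $\mathbf{b}_P^{\mathrm{inc}}$ is supported on $\{\mu_i>0\}$. For the experimenter, \Cref{thm:shrinkage_ic} shows that funding any $\mu_i\le 0$ yields nonpositive expected reward (strictly negative if $\mu_i<0$). Hence we may take $\mathbf{b}_E^{\mathrm{shrink}}$ supported on $\{\mu_i>0\}$; ties at $\mu_i=0$ contribute zero to both objectives and can be dropped. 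On this support $t>0$. There, $f>0$ clearly, and $h>0$ by \Cref{thm:shrinkage_ic}.

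\textbf{Step 2: the chain of inequalities.} The argument is then
\[ \mathbb{E}[R_P(\mathbf{b}_E^{\mathrm{shrink}})]\ \ge\ \beta\,\mathbb{E}[R_E^{\mathrm{shrink}}(\mathbf{b}_E^{\mathrm{shrink}})]\ \ge\ \beta\,\mathbb{E}[R_E^{\mathrm{shrink}}(\mathbf{b}_P^{\mathrm{inc}})]\ \ge\ a\beta\,\mathbb{E}[R_P^{\mathrm{inc}}(\mathbf{b}_P^{\mathrm{inc}})]. \]
The first and last inequalities are termwise applications of the definitions of $\beta$ and $a$ on the support $\{\mu_i>0\}$. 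The middle inequality is optimality of $\mathbf{b}_E^{\mathrm{shrink}}$ over the same feasible set of integer allocations, to which $\mathbf{b}_P^{\mathrm{inc}}$ belongs. No integrality or budget-exhaustion issues arise, because both mechanisms share the same feasible set and we never compare allocations with different constraints.

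\textbf{Step 3: positivity of the infima (main obstacle).} It remains to show $a,\beta>0$. Both ratios are continuous and strictly positive on $(0,\infty)$, so it suffices to compute their limits at the endpoints.
\begin{itemize}
\item As $t\to0^+$, a first-order expansion uses $\frac{d}{dt}\phi(c_\alpha-t)\big|_{t=0}=c_\alpha\phi(c_\alpha)$. This gives $h(t)\approx t\,(\overline{\Phi}(c_\alpha)+c_\alpha\phi(c_\alpha))$ and $f(t)\approx t\,\overline{\Phi}(c_\alpha)$. Hence $f/h\to \overline{\Phi}(c_\alpha)/(\overline{\Phi}(c_\alpha)+c_\alpha\phi(c_\alpha))\in(0,1)$ and $h/f\to 1+c_\alpha\phi(c_\alpha)/\overline{\Phi}(c_\alpha)$.
\item As $t\to\infty$, both $f$ and $h$ are $t+O(1)$, so both ratios tend to $1$.
\end{itemize}
Therefore both infima are attained or approached at finite positive values and are strictly positive, which proves the constant-fraction claim. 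The delicate point is that $h$ can fall below $f$: since $h-f=\phi(c_\alpha-t)-\phi(c_\alpha)<0$ for $t>2c_\alpha$, we get $a<1$, with the infimum achieved at an interior $t$. For this reason no closed form is expected, and I would evaluate $a$ and $\beta$ by one-dimensional numerical minimization at $c_\alpha=1.96$ to recover the stated value $\approx0.1772$. If the product falls short of that value, I would instead bound $\inf_t$ of the joint quantity more tightly.
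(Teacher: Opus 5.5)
Your proposal is correct and is essentially the paper's proof. Your $a$ and $\beta$ are the paper's $m_\alpha$ and $1/M_\alpha$ for the same scale-free ratio $q_\alpha(\lambda)=h/f$, and your chain is the paper's sandwich $M_\alpha\,\mathbb{E}[R_P(\mathbf{b}_E^{\mathrm{shrink}})]\ge\mathbb{E}[R_E(\mathbf{b}_E^{\mathrm{shrink}})]\ge\mathbb{E}[R_E(\mathbf{b}_P)]\ge m_\alpha\,\mathbb{E}[R_P(\mathbf{b}_P)]$ rescaled, giving $m_\alpha/M_\alpha\approx 0.9891/5.582\approx 0.1772$; your use of both endpoint limits (including the ratio tending to $1$ as $t\to\infty$) justifies $\inf_t h/f>0$ more carefully than the paper, which infers $m_\alpha>0$ from pointwise positivity alone.
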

The formal proof can be found in \Cref{app:proof_shrinkage_welfare}.

\section{Empirical Evaluation} \label{sec:experiments}
\begin{figure}[htbp]
    \centering
    \begin{subfigure}[t]{0.48\textwidth}
        \centering
        \includegraphics[width=0.95\linewidth]{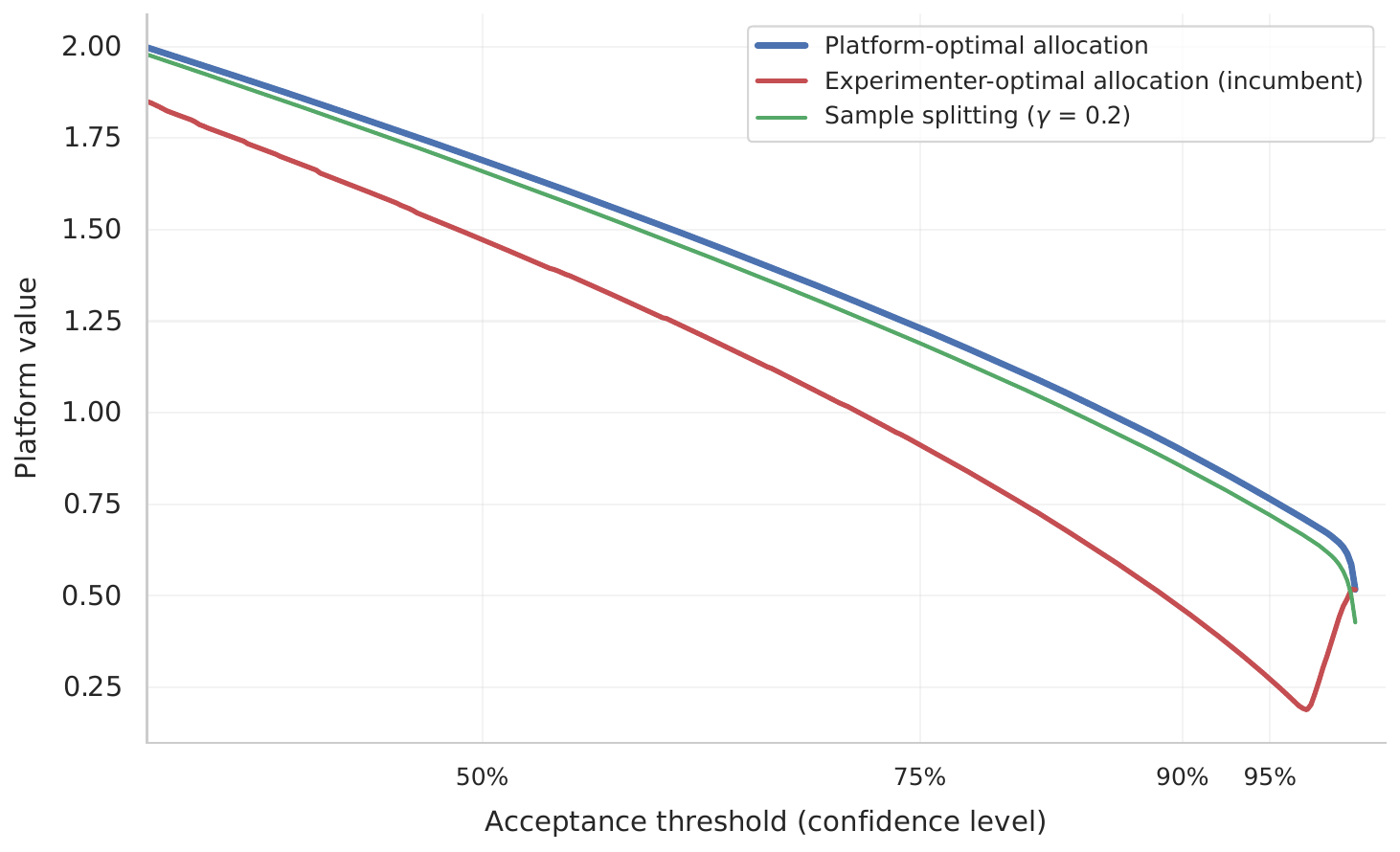}
        \caption{Instance 1: Perfect Priors}
        \label{fig:perfect_priors}
    \end{subfigure}
    \hfill
    \begin{subfigure}[t]{0.48\textwidth}
        \centering
        \includegraphics[width=0.95\linewidth]{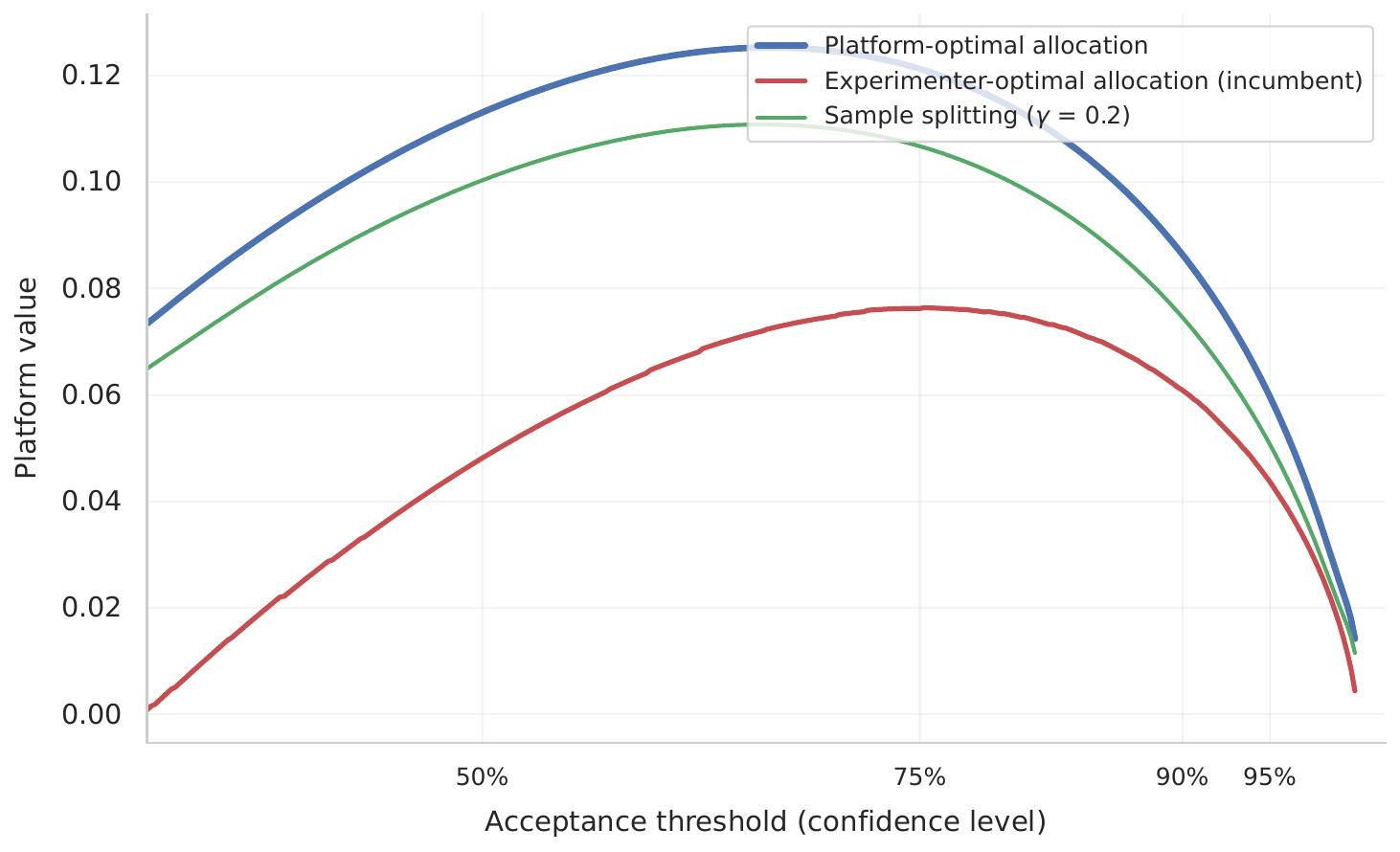}
        \caption{Instance 2: Noisy Priors}
        \label{fig:imperfect_priors}
    \end{subfigure}
    \caption{Expected platform value vs. acceptance threshold under platform-optimal allocation, experimenter-optimal allocation (incumbent mechanism), and sample splitting ($\gamma=0.2$).} \label{fig:results}
\end{figure}

We complement our theoretical findings with empirically calibrated simulations across two distinct instances ($N=1600$, $B=200{,}000$, minimum experiment size $500$ units). Distributing true expected treatment effects along a Gaussian mixture (a bulk of marginally harmful interventions alongside a thin tail of high-value winners), we evaluate sample splitting with a holdout fraction of $\gamma=0.2$. Instance 1 assumes perfect experimenter priors, whereas Instance 2 injects noise ($\tilde\mu_i=\mu_i+\varepsilon_i$, with $\varepsilon_i\sim\mathcal N(0,0.2^2)$) to rigorously test robustness against imperfect information.

In \Cref{fig:results}, we plot expected platform value against the acceptance threshold. Under perfect priors (Instance 1), a zero threshold is optimal for both the platform-optimal allocation and sample splitting: with aligned incentives, statistical filtering is entirely unnecessary. With noisy priors (Instance 2), thresholds become essential to screen out harmful interventions. Crucially, even when optimally tuned, the incumbent mechanism significantly underperforms the platform optimum, empirically reinforcing \Cref{thm:zero_platform_value}: statistical gatekeeping cannot rescue structural misalignment. 

Ultimately, these simulations highlight our core theoretical tradeoff. Sample splitting sacrifices a mere $1.0\%$ and $11.6\%$ of optimal platform value in Instances 1 and 2, respectively. In sharp contrast, strategic decentralization under the incumbent mechanism costs $7.4\%$ and a drastic $42.1\%$ of optimal platform value. Thus, sample splitting recovers the vast majority of value otherwise lost to decentralized decision-making.

\section{Conclusion}
Experimentation is operationally decentralized, driving a structural misalignment between platforms maximizing product value and experimenters optimizing significance-filtered empirical estimates. As we have demonstrated, this conflict is severe enough to drive program value to zero, and standard levers like significance thresholds fail to systematically resolve it. 

To address this failure, we provided novel economic interpretations for sample splitting and shrinkage, establishing their capacity to realign incentives \emph{ex ante}. Sample splitting achieves perfect alignment at a bounded cost to experimentation budget, whereas shrinkage fully preserves the budget while strictly deterring negative-expected-value interventions. Moving forward, platforms must navigate this fundamental tradeoff between budget efficiency and behavioral alignment to safely leverage decentralized, AI-driven feature generation.

\newpage
\small 
\setlength{\bibsep}{0pt}
\bibliographystyle{abbrvnat}
\bibliography{ref}
\normalsize

\newpage
\appendix
\section{Supplementary Material}

\subsection{Proof of \Cref{thm:zero_platform_value}} \label{app:proof_zero_platform_value}
\begin{proof}
Fix $\alpha,B,V$ and consider $B+1$ interventions: intervention $0$ has $(\mu_0,\sigma_0)=(M,1)$, while each $j\in \{1, \dots , B \}$ has $(\mu_j,\sigma_j)=(0,S)$.
The platform's expected reward for allocating budget $b >0$ to any $j\in\{1, \dots , B \}$ is zero, whereas for intervention 0, it is $M\overline{\Phi}(c_\alpha-M\sqrt b)>0$, which is monotonically non-decreasing in $b$. Thus, the platform-optimal allocation $\mathbf{b}_P^{\mathrm{inc}}$ assigns the entire budget $B$ to intervention 0, achieving an expected reward of $M\overline{\Phi}(c_\alpha-M\sqrt B)>V$ for sufficiently large $M$.

In contrast, for any zero-mean intervention $j\in\{1, \dots , B \}$, the experimenter's expected reward $r_{E,j}^{\mathrm{inc}}(0,b)=S\phi(c_\alpha)/\sqrt b$ is strictly decreasing in $b$. Consequently, the experimenter maximizes reward across these interventions by allocating exactly one unit to each. By choosing $S>\max_{1\leq k\leq B} \frac{r_{E,0}^{\mathrm{inc}}(M,k)}{k\phi(c_\alpha)}$, allocating any $k\geq1$ units to intervention 0 yields a total experimenter reward of at most $r_{E,0}^{\mathrm{inc}}(M,k)+(B-k)S\phi(c_\alpha) < BS\phi(c_\alpha)$. Thus, the experimenter-optimal allocation is $\mathbf{b}_E^{\mathrm{inc}}=(0,1,\ldots,1)$, which yields $\mathbb E[R_P^{\mathrm{inc}}(\mathbf{b}_E^{\mathrm{inc}},\alpha)]=0$.
\end{proof}

\subsection{Proof of \Cref{thm:incentive_alignment}} \label{app:proof_incentive_alignment}
\begin{proof}
Fix any feasible budget allocation $\mathbf{b}$, and let $L_i \in \{0,1\}$ indicate whether intervention $i$ is launched. By construction, $L_i$ depends only on the selection sample, while $\hat{\mu}_i^E$ is computed from the independent evaluation sample.
Thus, $\hat{\mu}_i^E$ and $L_i$ are independent.
Moreover, the evaluation-sample estimate is unbiased on i.i.d. samples, so $\mathbb{E}[\hat{\mu}_i^E] = \mu_i$. The experimenter's expected reward is therefore
$\mathbb{E}[R_E^{\mathrm{split}}(\mathbf{b}, \alpha)] =
\sum_{i=1}^N \mathbb{E}[\hat{\mu}_i^E]\mathbb{E}[L_i(b_i (1 - \gamma),\alpha)] =
\sum_{i=1}^N \mu_i p_i(b_i (1 - \gamma)) =
\mathbb{E}[R_P^{\mathrm{split}}(\mathbf{b} , \alpha)]$.
Since their expected rewards coincide for every feasible allocation $\mathbf{b}$, their sets of optimal allocations must also coincide under risk neutrality.
\end{proof}

\subsection{Proof of \Cref{thm:power_loss}} \label{app:proof_power_loss}
\begin{proof}
For any intervention with $\mu_i>0$, its launch probability given $b>0$ units is $p_i(b)=\overline{\Phi}(c_\alpha-\mu_i\sqrt b/\sigma_i)$, which becomes $p_i((1-\gamma)b)$ under the  sample-splitting mechanism.
We first show that $p_i((1-\gamma)b)\geq(1-\gamma)p_i(b)$. Let $a_i=\mu_i/\sigma_i>0$ and $x=a_i\sqrt b$.
Then, $p_i''(b)=\frac{a_i^4}{4x^3}\phi(c_\alpha-x)(c_\alpha x-x^2-1)$.
Thus, $p_i''(b)\leq0$ if and only if $x^2-c_\alpha x+1\geq0$. For $c_\alpha\leq2$, we have
$x^2-c_\alpha x+1=(x-c_\alpha/2)^2+1-c_\alpha^2/4\geq0$,
meaning $p_i(b)$ is concave. Since its continuous extension satisfies $p_i(0)=\overline{\Phi}(c_\alpha)\geq0$, concavity implies
$p_i((1-\gamma)b)\geq(1-\gamma)p_i(b)+\gamma p_i(0) \geq (1-\gamma)p_i(b)$.

The full-budget optimum allocates zero budget to interventions with $\mu_i<0$, as they would yield negative expected value. Applying the concavity inequality to all positively valued interventions in $\mathbf{b}_P^{\mathrm{inc}}$ gives
$\mathbb E[R_P^{\mathrm{split}}(\mathbf{b}_P^{\mathrm{inc}},\alpha)]\geq(1-\gamma)\mathbb E[R_P^{\mathrm{inc}}(\mathbf{b}_P^{\mathrm{inc}},\alpha)]$.
Finally, by \Cref{thm:incentive_alignment}, $\mathbf{b}_E^{\mathrm{split}}$ maximizes platform value under sample splitting, yielding
\begin{align}
\mathbb E[R_P^{\mathrm{split}}(\mathbf{b}_E^{\mathrm{split}},\alpha)] & =
\mathbb E[R_P^{\mathrm{split}}(\mathbf{b}_P^{\mathrm{split}},\alpha)] \nonumber \\
& \geq\mathbb E[R_P^{\mathrm{split}}(\mathbf{b}_P^{\mathrm{inc}},\alpha)] \nonumber \\
& \geq(1-\gamma)\mathbb E[R_P^{\mathrm{inc}}(\mathbf{b}_P^{\mathrm{inc}},\alpha)].
\end{align}
\end{proof}

\subsection{Proof of \Cref{thm:shrinkage_ic}} \label{app:proof_shrinkage_ic}
\begin{proof}
Fix an intervention $i$ and budget $b_i>0$.
Recall that its expected reward for the experimenter under the incumbent mechanism is
$r_{E,i}^{\mathrm{inc}}(\mu_i,b_i)
=
\mu_i p_i(b_i)
+
\frac{\sigma_i}{\sqrt{b_i}}
\phi\left(
c_\alpha-\frac{\mu_i\sqrt{b_i}}{\sigma_i}
\right),$
where $c_\alpha=z_{1-\alpha/2}$. The shrinkage charge is exactly the expected reward of a zero-effect intervention with the same variance as $i$:
$C_i(b_i,\alpha)
=
r_{E,i}^{\mathrm{inc}}(0,b_i)
=
\frac{\sigma_i}{\sqrt{b_i}}\phi(c_\alpha).$
Hence, the experimenter's expected reward for allocating $b_i$ experimental units to intervention $i$ under the shrinkage mechanism is:
$r_{E,i}^{\mathrm{shrink}}(\mu_i,b_i)  =  r_{E,i}^{\mathrm{inc}}(\mu_i,b_i)-r_{E,i}^{\mathrm{inc}}(0,b_i)$.

It immediately follows that $r_{E,i}^{\mathrm{shrink}}(0,b_i) = 0 $, meaning the experimenter has no incentive to launch experiments where the true effect is zero.

Finally, differentiating $r_{E,i}^{\mathrm{shrink}}(\mu,b_i)$ with respect to $\mu$:
\[
\frac{\partial r_{E,i}^{\mathrm{shrink}}(\mu,b_i)}{\partial \mu}
=
\frac{\partial r_{E,i}^{\mathrm{inc}}(\mu,b_i)}{\partial \mu} + 0
=
p_i(b_i)
+
c_\alpha
\phi\left(
c_\alpha-\frac{\mu\sqrt{b_i}}{\sigma_i}
\right)
>0,
\]
since $p_i(b_i)>0$, $c_\alpha>0$, and $\phi(\cdot)>0$. Thus,
\[
r_{E,i}^{\mathrm{shrink}}(\mu_i,b_i)
\begin{cases}
>0, & \mu_i>0,\\
=0, & \mu_i=0,\\
<0, & \mu_i<0,
\end{cases}
\]
which proves the claim.
\end{proof}

\subsection{Proof of \Cref{thm:shrinkage_welfare}} \label{app:proof_shrinkage_welfare}
\begin{proof}
Fix an intervention $i$ with $\mu_i>0$ and an allocation $b_i>0$, and define its signal-to-noise ratio as
$\lambda_i = \frac{\mu_i\sqrt{b_i}}{\sigma_i} >0.$
Its launch probability is
$p_i(b_i)
=
\overline{\Phi}(c_\alpha-\lambda_i).$
Hence, its expected contribution to the platform's reward is
$
r_{P,i}^{\mathrm{shrink}}(\mu_i,b_i) =
r_{P,i}^{\mathrm{inc}}(\mu_i,b_i) =
\mu_ip_i(b_i) = \mu_i\overline{\Phi}(c_\alpha-\lambda_i).$

From \Cref{eq:experimenter_expected_reward_incumbent} under the incumbent mechanism, the experimenter's expected reward from this intervention is
$r_{E,i}^{\mathrm{inc}}
(\mu_i,b_i)
=
\mu_ip_i(b_i)
+
\frac{\sigma_i}{\sqrt{b_i}}
\phi(c_\alpha-\lambda_i).$
The shrinkage mechanism subtracts
$C_i(b_i,\alpha)
=
\frac{\sigma_i}{\sqrt{b_i}}\phi(c_\alpha).$
Thus, the experimenter's expected reward under the shrinkage mechanism is
\begin{equation}
r_{E,i}^{\mathrm{shrink}}(\mu_i,b_i)
=
\mu_ip_i(b_i)
+
\frac{\sigma_i}{\sqrt{b_i}}
\left[
\phi(c_\alpha-\lambda_i)-\phi(c_\alpha)
\right].
\end{equation}

Since
$\mu_i
=
\lambda_i\frac{\sigma_i}{\sqrt{b_i}},$
we can rewrite the platform's expected reward as
$r_{P,i}^{\mathrm{shrink}}(\mu_i, b_i)
= \mu_ip_i(b_i)
=
\frac{\sigma_i}{\sqrt{b_i}}
\lambda_i
\overline{\Phi}(c_\alpha-\lambda_i).$
Thus,
\begin{align}
\frac{r_{E,i}^{\mathrm{shrink}}(\mu_i,b_i) }{r_{P,i}^{\mathrm{shrink}}(\mu_i,b_i)}
&=
1+
\frac{
\frac{\sigma_i}{\sqrt{b_i}}
\left[
\phi(c_\alpha-\lambda_i)-\phi(c_\alpha)
\right]
}{
\frac{\sigma_i}{\sqrt{b_i}}
\lambda_i\overline{\Phi}(c_\alpha-\lambda_i)
}
\nonumber\\
&=
1+
\frac{
\phi(c_\alpha-\lambda_i)-\phi(c_\alpha)
}{
\lambda_i\overline{\Phi}(c_\alpha-\lambda_i)
}
\nonumber\\
&=:
q_\alpha(\lambda_i).
\end{align}

We next demonstrate that $q_\alpha(\lambda) \in \mathbb{R}_{> 0}$ for all $\lambda>0$.
Let $m_\alpha = \inf_{\lambda > 0} q_{\alpha} (\lambda)$ and $M_{\alpha} = \sup_{\lambda > 0} q_{\alpha} (\lambda) $.
First, by \Cref{thm:shrinkage_ic},
$r_{E,i}^{\mathrm{shrink}}(\mu_i,b_i) > 0  $ whenever $\mu_i>0$.
Since $r_{P,i}^{\mathrm{shrink}}(\mu_i,b_i) = \mu_ip_i(b_i) > 0$ as well, $q_\alpha(\lambda)>0 \;\forall \lambda>0.$

To see that $M_\alpha<\infty$, first note that
$q_\alpha(\lambda)$ has the finite limit
\begin{align}
\lim_{\lambda\to0^+}q_\alpha(\lambda) & =
1 + \lim_{\lambda\to0^+}
\frac{
\phi(c_\alpha-\lambda)-\phi(c_\alpha)
}{
\lambda\overline{\Phi}(c_\alpha-\lambda)
}
\nonumber\\
&=
1 + \lim_{\lambda\to0^+}
\frac{
(c_\alpha-\lambda)\phi(c_\alpha-\lambda)
}{
\overline{\Phi}(c_\alpha-\lambda)
+
\lambda\phi(c_\alpha-\lambda)
}
\nonumber\\
&=
1 + \frac{
c_\alpha\phi(c_\alpha)
}{
\overline{\Phi}(c_\alpha)
},
\end{align}
where the second equality follows from L'H\^opital's rule using $\phi'(x)=-x\phi(x)$.

Hence, by continuity, $q_\alpha(\cdot)$ is bounded on $(0,\delta]$ for some $\delta>0$.
For $\lambda\geq\delta$,
$\lambda\overline{\Phi}(c_\alpha-\lambda)
\geq
\delta\overline{\Phi}(c_\alpha-\delta)>0,$
while $\phi$ is bounded. Thus $q_\alpha(\lambda)$ is also bounded on
$[\delta,\infty)$, and therefore $M_\alpha<\infty$.
Thus,
\[
0
<
m_\alpha
\leq
M_\alpha
<
\infty.
\]

It follows that, for every positive-effect intervention $i$,
\[
m_\alpha r_{P,i}^{\mathrm{shrink}} (\mu_i,b_i)
\leq
r_{E,i}^{\mathrm{shrink}} (\mu_i,b_i)
\leq
M_\alpha r_{P,i}^{\mathrm{shrink}} (\mu_i,b_i).
\]

We now extend this intervention-level bound to complete budget allocations.
By \Cref{thm:shrinkage_ic}, interventions with negative and zero true effects yield strictly negative and zero expected reward for the experimenter under the shrinkage mechanism, respectively.
Thus, an experimenter-optimal allocation will never allocate budget to a negative-effect intervention.
Similarly, a platform-optimal allocation never benefits from allocating budget to any intervention with non-positive true effect.
Thus, for any relevant allocation $\mathbf{b}$:
\begin{equation}
\mathbb{E}[R_E^{\mathrm{shrink}}(\mathbf{b},\alpha)]
=
\sum_{i:\mu_i>0} r_{E,i}^{\mathrm{shrink}}(\mu_i,b_i)
\geq
m_\alpha
\sum_{i:\mu_i>0} r_{P,i}^{\mathrm{shrink}}(\mu_i,b_i)
=
m_\alpha
\mathbb{E}[R_P^{\mathrm{shrink}}(\mathbf{b},\alpha)],
\label{eq:shrink_reward_inequality}
\end{equation}
and similarly,
$\mathbb{E}[R_E^{\mathrm{shrink}}(\mathbf{b},\alpha)]
\leq
M_\alpha
\mathbb{E}[R_P^{\mathrm{shrink}}(\mathbf{b},\alpha)].$

Let
$\mathbf{b}_E^{\mathrm{shrink}}$ and $\mathbf{b}_P^{\mathrm{shrink}}$
denote experimenter- and platform-optimal allocations under the shrinkage mechanism, respectively. By definition,
$\mathbb{E}\left[
R_E^{\mathrm{shrink}}(\mathbf{b}_E^{\mathrm{shrink}},\alpha)
\right]
\geq
\mathbb{E}\left[
R_E^{\mathrm{shrink}}(\mathbf{b}_P^{\mathrm{shrink}},\alpha)
\right].$

Putting everything together,
\begin{align}
    M_\alpha \mathbb{E}\left[ R_P^{\mathrm{shrink}}(\mathbf{b}_E^{\mathrm{shrink}},\alpha) \right] &  \ge
    \mathbb{E}\left[ R_E^{\mathrm{shrink}}(\mathbf{b}_E^{\mathrm{shrink}},\alpha) \right] \nonumber \\
    & \ge \mathbb{E}\left[ R_E^{\mathrm{shrink}}(\mathbf{b}_P^{\mathrm{shrink}},\alpha) \right] \nonumber \\
    & \ge m_{\alpha}  \mathbb{E}\left[ R_P^{\mathrm{shrink}}(\mathbf{b}_P^{\mathrm{shrink}},\alpha) \right]. \nonumber
\end{align}

Since $M_\alpha>0$,
\begin{equation*}
\mathbb{E}\left[
R_P^{\mathrm{shrink}}(\mathbf{b}_E^{\mathrm{shrink}},\alpha)
\right]
\geq
\frac{m_\alpha}{M_\alpha}
\mathbb{E}\left[
R_P^{\mathrm{shrink}}(\mathbf{b}_P^{\mathrm{shrink}},\alpha)
\right].
\end{equation*}
Since shrinkage changes only the experimenter's payoff, not the launch rule, $R_P^{\mathrm{shrink}}(\mathbf{b},\alpha)=R_P^{\mathrm{inc}}(\mathbf{b},\alpha)$ for every allocation $\mathbf{b}$, so $\mathbf{b}_P^{\mathrm{shrink}}=\mathbf{b}_P^{\mathrm{inc}}$ and the right-hand side equals $\frac{m_\alpha}{M_\alpha}\mathbb{E}[R_P^{\mathrm{inc}}(\mathbf{b}_P^{\mathrm{inc}},\alpha)]$.

For $\alpha=0.05$, numerical evaluation yields $m_\alpha \approx 0.9891$ and $M_\alpha \approx 5.5820$, and therefore $\frac{m_\alpha}{M_\alpha} \approx 0.1772$, completing the proof.
\end{proof}

\end{document}